\documentclass[prl,twocolumn,showpacs,preprintnumbers,superscriptaddress]{revtex4}

\usepackage{graphicx}
\usepackage{color}
\usepackage{amsmath}
\usepackage{textcomp}
\usepackage{epstopdf}
\usepackage{revsymb}
\usepackage{amsbsy}
\usepackage{mathrsfs}
\usepackage{amsfonts}
\usepackage{amsthm}
\usepackage{float}
\usepackage{natbib}
\usepackage{xcolor}
\usepackage[colorlinks=true, pdfborder={0 0 0}, linkcolor=red,citecolor=blue, anchorcolor=blue]{hyperref}

\newcommand{\braket}[2]{\langle #1 | #2 \rangle}

\theoremstyle{plain}
\newtheorem{theorem}{Theorem}
\newtheorem{lemma}[theorem]{Lemma}

\theoremstyle{definition}

\setlength{\paperheight}{11in}
\setlength{\paperwidth}{8.5in}

\begin{document}

\title{Broadcasting Quantum  Coherence via Cloning}
\author{Udit Kamal Sharma, Indranil Chakrabarty}
\affiliation{Center for Security$,$ Theory and Algorithmic Research$,$ International Institute of Information Technology$,$ Gachibowli$,$ Hyderabad$,$ India}
\author{Manish Kumar Shukla}
\affiliation{Center for Computational Natural Sciences and Bioinformatics$,$ International Institute of Information Technology$,$ Gachibowli$,$ Hyderabad$,$ India}

\begin{abstract}
\noindent{Quantum coherence has recently emerged as a key candidate for use as a resource in various quantum information processing tasks. Therefore, it is of utmost importance to explore the possibility of creating a greater number of coherent states from an existing coherent pair. In other words we  start with initial incoherent pair and induce coherence via quantum cloning. More specifically, we start with a genuinely incoherent state which remains incoherent with the change of basis and make it a coherent state at the end. This process is known as broadcasting of coherence via cloning, which can either be optimal or non-optimal. Interestingly, in this work for the first time we are able to give a method by which we can introduce coherence in the genuinely incoherent state. We use the computational basis representation of the most general two-qubit mixed state, shared between Alice and Bob, as the input state for the universal symmetric optimal Buzek-Hillery cloner. First of all we show that it is impossible to ensure optimal broadcast of coherence. Secondly, in case of non-optimal broadcasting, we show that the coherence introduced in the output states of the cloner will always be lesser than the initial coherence of the input state. Finally, we take the examples of statistical mixture of most coherent state (MCS) \& most incoherent state (MIX) and  Bell-diagonal states (BDS) to obtain their respective ranges of non-optimal broadcasting in terms of their input state parameters.}
\end{abstract}

\maketitle

\section{I. Introduction}

\noindent{\textbf{Quantum Coherence: }From the perspective of Quantum Information Theory, it is always interesting to ask what can be classified as a useful resource for information processing tasks. In last few decades, we have witnessed that quantum correlations \cite{correlation1, correlation2} and entanglement \cite{entanglement}, in particular, have lent a helping hand in revolutionizing many information processing tasks. In the recent years, another well-known quantum phenomenon, known as quantum coherence, has been identified as a candidate to act as a resource in quantum computation and quantum information processing.}\\ 

\noindent{Recently, a rigorous framework to quantify coherence has been proposed by Baumgratz \textit{et al.}\cite{baumgratz}.Some of the functionals which satisfy Baumgratz's framework include the relative entropy of coherence, the $l_1$-norm of coherence, the Wigner-Yanase-Dyson skew information \cite{Girolami} and Robustness of coherence\cite{Napoli, rob}.} \\

\noindent{Moreover, just like entanglement, a significant amount of research has been carried out to propose an operational resource theory of coherence\cite{resource1,resource2, resource3,resource4, resource5, resource6}. However, due to the basis dependent nature of quantum coherence, some researchers felt the need to address the notion of \textit{genuine quantum coherence},thereby, suggesting classes of genuinely incoherent operations\cite{genuine}. Apart from the research on single party systems, prospects of multipartite coherence like frozen quantum coherence and monogamy have been explored in \cite{Yao,Chandrashekhar}.}\\

\noindent \textbf{Quantum Cloning and Broadcasting} The impossibility  to clone quantum states is regarded as one of the most fundamental restriction that comes from nature\cite{wootters}.Even though we cannot copy an unknown quantum state perfectly, but quantum mechanics never rules out the possibility of cloning it approximately. It allows probabilistic cloning as one can always clone an arbitrary quantum state perfectly with some non-zero probability of success. In 1996, Buzek \textit{et al.} went beyond the idea of perfect cloning and introduced the concept of approximate state independent cloning \cite{buzek1}. 
Later, in another work by Gisin and Massar, this cloner was shown to be optimal \cite{brub, gisin2}.\\

\noindent In a recent work, it has been shown that it is impossible to optimally broadcast quantum correlation across different laboratories \cite{sourav}. But if we consider non-optimal broadcasting, it is possible to broadcast quantum correlation between separated parties \cite{Kher}. Apart from cloning, there are different ways to broadcast quantum correlation. Quite recently, many authors showed by using sophisticated methods that correlations in a single bipartite state can be locally or uni-locally broadcast if and only if the states are classical (i.e. having classical correlation) or classical-quantum respectively \cite{barnum, piani, barnum-gen, luo, luo-li}. \\


\noindent{\textbf{Motivation for Broadcasting Coherence} Unlike quantum entanglement, quantum coherence is a basis-dependent quantity. As an example, let us take the case of a 2-qubit state, $\rho_{12} = |00\rangle\langle00|$. Clearly, $\rho_{12}$ is an incoherent state in the computational basis $\{|00\rangle,|01\rangle, |10\rangle,|11\rangle\}$. But, when we write the same state in the Bell basis$\{|\Phi^+\rangle,|\Phi^-\rangle, |\Psi^+\rangle,|\Psi^-\rangle\}$, then, ${\rho}_{12} = \frac{1}{2}(|\Phi^+\rangle\langle\Phi^+| + |\Phi^+\rangle\langle\Phi^-| + |\Phi^-\rangle\langle\Phi^+| + |\Phi^-\rangle\langle\Phi^-|)$ becomes a coherent state because in the Bell-basis representation, $\rho_{12}$ has off-diagonal elements. So, now, we can have a very relevant question : What is the need for broadcasting coherence via cloning if we can always make coherence appear or disappear in the given state by a mere change of basis ?}\\ 

\noindent{In order to answer this question, let us take the example of the genuinely incoherent 2-qubit state, $\rho = \frac{I}{4}$, where $I$ is the $4 \times 4$ Identity matrix in the computational basis. An interesting property of this state is that it will always have the same representation in any basis of our choice, i.e, $\rho = \frac{1}{4}(|x_{1}\rangle\langle x_{1}| + |x_{2}\rangle\langle x_{2}| + |x_{3}\rangle\langle x_{3}| + |x_{4}\rangle\langle x_{4}|)$ in an arbitrary basis $\{|x_{1}\rangle,|x_{2}\rangle, |x_{3}\rangle,|x_{4}\rangle\}$. So, in such states, we cannot introduce coherence by merely changing the basis.}\\

\noindent{Therefore, in this work, one of our aim is to introduce coherence in the genuinely incoherent state by the process of cloning and broadcasting which is otherwise impossible with the change of basis. In the non-local cloning machine (see Fig \ref{NL}), we have two inputs : the state to be cloned ($\rho_{12}$) and the blank state ($\rho_{34}$), where the copy is made. The outputs of the cloning machine ($\tilde{\rho_{12}}$ and $\tilde{\rho_{34}}$) are independent of the nature of the blank state ($\rho_{34}$). Hence, we propose that we use the most incoherent 2-qubit state as the blank state ($\rho_{34}$) so that in the post-cloning scenario, we would be able to introduce coherence in it. A similar argument can be extended for the local cloning situation (described by Fig \ref{L}) where we can use single qubit state $\rho = \frac{I}{2}$ as the blank state in the qubits 3 and 4.
After introducing coherence, we can further convert the coherence in $\tilde{\rho_{34}}$) to entanglement\cite{Uttam}. We can even transform the coherence in $\tilde{\rho_{34}}$) to discord\cite{Ma}}.\\

\noindent{In this paper, our primary objective is to study the broadcast of coherence as a correlation present in a pair of qubits to other pairs via local and non-local cloning, with the pairs being spread out across two labs. 
Here, we have used the $l_1$ norm as a measure of coherence. Given a state $\rho$, with its matrix elements as $\rho_{ij}$ the amount of coherence 
present in the state $\rho$ in the basis $\{|i\rangle\}$ is given by the quantity 
$C(\rho)=\sum_{i,j=1}^d |\langle i|\rho |j\rangle|$, where $i \neq j$. It is interesting to note that coherence is a basis dependent quantity as the representation of coherence will be different in different basis. Since $l_1$-norm is a function of the off-diagonal elements of the given density matrix representation, clearly, it is evident that the value of the coherence will be zero in the eigenbasis of the density matrix, where there are no off diagonal elements. It is intuitive that for all other basis the physical findings followed by implications will be similar. So, without any loss of generality here we have 
computed the coherence in the two-qubit computational basis $\{|00\rangle,|01\rangle\, |10\rangle\, |11\rangle\}$.}

\section{II. Optimal and Non optimal broadcasting of coherence}

\noindent{Let us begin with a situation where we have two distant parties : Alice and Bob and they share a two-qubit mixed state $\rho_{12}$
which can be canonically expressed as,
\begin{eqnarray}
&&\rho_{12}=\frac{1}{4}\left[I_{2\times 2}\otimes I_{2\times 2}+\sum_{i=1}^{3}(x_{i}\sigma_{i}\otimes I)+\sum_{i=1}^{3}\right.\nonumber\\
&&(y_{i}I\otimes\sigma_{i})+\left.\sum_{i,j=1}^{3}(t_{ij}\sigma_{i}\otimes\sigma_{j})\right] = \left\{\overrightarrow{X},\:\overrightarrow{Y},\: T_{3\times 3}\right\},\:\:\: \label{eq:mix}
\end{eqnarray}
\noindent where $\overrightarrow{X}=\left\{ x_{1},\: x_{2},\: x_{3}\right\}$ and $\overrightarrow{Y}=\left\{ y_{1},\: y_{2},\: y_{3}\right\}$ are Bloch vectors with $0\leq\left\Vert \overrightarrow{x}\right\Vert \leq1$ and $0\leq\left\Vert \overrightarrow{y}\right\Vert \leq1$. Here, $t_{ij}$'s ($i,\:j$ = $\{1,2,3\}$) 
are elements of the correlation matrix ($T=[t_{ij}]_{3 \times 3}$), whereas $\sigma_i=(\sigma_1,\sigma_2,\sigma_3)$ are 
the Pauli matrices and $I$ is the identity matrix. We consider the $2$-qubit state
$\rho_{12}$ shared between Alice and Bob as the input state to the two types of cloners : universal symmetric optimal Buzek Hillery non-local cloner ($U_{bh}^{nl}$ or simply $U_{nl}$) and local cloner ($U_{bh}^{l}$ or simply $U_{l}$) (see Appendix A).}\\

\noindent We separately investigate both \textit{Optimal} and \textit{Non Optimal Broadcasting}. In order to broadcast coherence between the desired pairs $(1,2)$ and $(3,4)$ (see Fig \ref{L} and Fig \ref{NL}),  we have to maximize the amount of coherence between the non-local output pairs $(1,2)$ and $(3,4)$, regardless of the local output states between $(1,3)$ and $(2,4)$. However, for optimal broadcasting, we should ideally have no coherence between the local pair of qubits $(1,3)$ and $(2,4)$ in order to have the maximum amount of coherence in non local pair of qubits .\\

\noindent \textbf{Definition: Optimal Broadcasting (Local)}\\ 

\noindent{\label{def:broad_local }As shown in Fig \ref{L}, we start with two parties Alice and Bob sharing a coherent input state $\rho_{12}$, with qubits $1$ and $2$ of $\rho_{12}$ on Alice's side and Bob's side respectively. Also, qubits $3$ and $4$ serve as blank states ($|\Sigma_{3}\rangle \langle \Sigma_{3}| = \frac{I}{2}$ and $|\Sigma_{4}\rangle \langle \Sigma_{4}| = \frac{I}{2}$) on Alice's and Bob's side respectively. A coherent input state $\rho_{12}$ is said to be broadcast  optimally after the application of local unitary part  $U_{l}^{1} \otimes U_{l}^{2}$ of the cloning operation, 
on the qubits $1$ and $2$ respectively, if for some values of the input state parameters, the non-local output states, given by : } 
\begin{eqnarray*}
\label{eq1}
\begin{aligned}
\tilde{\rho}_{12} & = Tr_{34}\left[\tilde{\rho}_{1234}\right] \\
 & = Tr_{34}\left[U_{l}^{1}\otimes U_{l}^{2}\left(\rho_{12} \otimes |\Sigma_{3}\rangle \langle \Sigma_{3}| \otimes |\Sigma_{4}\rangle \langle \Sigma_{4}| \right)\right],
\end{aligned}\\
\begin{aligned}
\tilde{\rho}_{34} & = Tr_{12}\left[\tilde{\rho}_{1234}\right] \\
 & = Tr_{12}\left[U_{l}^{1}\otimes U_{l}^{2}\left(\rho_{12} \otimes |\Sigma_{3}\rangle \langle \Sigma_{3}| \otimes |\Sigma_{4}\rangle \langle \Sigma_{4}| \right)\right],
\end{aligned}
\end{eqnarray*} 

\noindent{are coherent i.e $C(\tilde{\rho}_{12}) \neq 0, C(\tilde{\rho}_{34}) \neq 0$, while the  local output states, given by :}
\begin{eqnarray*}
\label{eq2}
\begin{aligned}
\tilde{\rho}_{13} & = Tr_{24}\left[\tilde{\rho}_{1234}\right] \\
 & = Tr_{24}\left[U_{l}^{1}\otimes U_{l}^{2}\left(\rho_{12} \otimes |\Sigma_{3}\rangle \langle \Sigma_{3}| \otimes |\Sigma_{4}\rangle \langle \Sigma_{4}| \right)\right],
\end{aligned}\\
\begin{aligned}
\tilde{\rho}_{24} & = Tr_{13}\left[\tilde{\rho}_{1234}\right] \\
 & = Tr_{13}\left[U_{l}^{1}\otimes U_{l}^{2}\left(\rho_{12} \otimes |\Sigma_{3}\rangle \langle \Sigma_{3}| \otimes |\Sigma_{4}\rangle \langle \Sigma_{4}| \right)\right],
\end{aligned}
\end{eqnarray*} 

\noindent{are incoherent i.e $C(\tilde{\rho}_{13}) = 0, C(\tilde{\rho}_{24}) = 0$.}\\

\begin{figure}[htbp]
\begin{center}
\includegraphics[height=8cm,width=7cm]{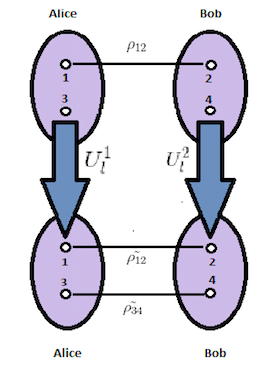}
\end{center}
\caption{\noindent \scriptsize
Application of local cloning operations $U_{l}^1$ and $U_{l}^2$ on the qubits 1 and 2 of the input state $\rho_{12}$ and the blank states $|\Sigma_{3}\rangle \langle \Sigma_{3}|$ and $|\Sigma_{4}\rangle \langle \Sigma_{4}|$ on Alice's and Bob's labs respectively to get the output states $\tilde{\rho}_{12}$ and $\tilde{\rho}_{34}$.
}
\label{L}
\end{figure}

\noindent{\textbf{Definition: Optimal Broadcasting (Non-local)}}\\ 

\noindent{\label{def:broad_nonlocal}
As shown in Fig \ref{NL}, we start with two parties Alice and Bob sharing a coherent input state $\rho_{12}$, with qubits $1$ and $2$ on Alice's side and Bob's side respectively; and a blank state $\rho_{34} = \frac{I}{4}$, with qubits $3$ and $4$ jointly shared by both of them. 
A coherent input state $\rho_{12}$ is said can be broadcasted  optimally after the application of non-local unitary part $U_{nl}$ of the cloning operation together on qubits 1 and 2, if for some values of the input state parameters, the non-local output states, given by : }
\begin{eqnarray*}
&& \tilde{\rho}_{12}= Tr_{34}\left[\tilde{\rho}_{1234}\right]=Tr_{34}\left[U_{nl}\left(\rho_{12}\otimes \rho_{34}\right)\right], {}\nonumber\\&&
\tilde{\rho}_{34} = Tr_{12}\left[\tilde{\rho}_{1234}\right]=Tr_{12}\left[U_{nl} \left(\rho_{12}\otimes \rho_{34}\right) \right],
\end{eqnarray*}
\noindent{are coherent i.e $C(\tilde{\rho}_{12}) \neq 0$,and  $C(\tilde{\rho}_{34}) \neq 0$, while the  local output states, given by :}
\begin{eqnarray*}
&&\tilde{\rho}_{13} =Tr_{24}\left[\tilde{\rho}_{1234}\right]= Tr_{24}\left[U_{nl}\left(\rho_{12}\otimes \rho_{34}\right)\right],{}\nonumber\\&&
\tilde{\rho}_{24} = Tr_{13}\left[\tilde{\rho}_{1234}\right]= Tr_{13}\left[U_{nl}\left(\rho_{12}\otimes \rho_{34}\right)\right],
\end{eqnarray*}
\noindent{are incoherent i.e $C(\tilde{\rho}_{13}) = 0$ and  $C(\tilde{\rho}_{24}) = 0$.}\\

\begin{figure}[htbp]
\begin{center}
\includegraphics[height=8cm,width=6cm]{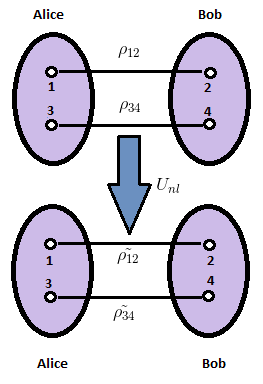}
\end{center}
\caption{\noindent \scriptsize
Application of non-local cloning operation $U_{nl}$ on the input state $\rho_{12}$ and blank state $\rho_{34}$ to get the output states $\tilde{\rho}_{12}$ and $\tilde{\rho}_{34}$.
}
\label{NL}
\end{figure}

\noindent{Interestingly, we find that it is impossible to broadcast coherence optimally via both non-local and local cloning transformations. We show this result for the computational basis $\{|00\rangle,|01\rangle\, |10\rangle\, |11\rangle\}$ with the general mixed state (given in Eq.~\eqref{eq:mix}) as the input resource without any loss of generality.}\\

\noindent \begin{theorem}
\label{theo:local_broad_no_optimal}
\textit{Given a two qubit general mixed state $\rho_{12}$ and B-H local cloning 
transformations 
(see Appendix A), 
it is impossible to optimally broadcast the coherence within $\rho_{12}$ into two coherent states: $\tilde{\rho}_{12}$ = $Tr_{34}\left[U_{l}^{1}\otimes U_{l}^{2}\left(\rho_{12}\otimes |\Sigma_{3}\rangle \langle \Sigma_{3}| \otimes |\Sigma_{4}\rangle \langle \Sigma_{4}|\right)\right]$, and
$\tilde{\rho}_{34}$ = $Tr_{12}\left[U_{l}^{1} \otimes U_{l}^{2}\left(\rho_{12}\otimes |\Sigma_{3}\rangle \langle \Sigma_{3}| \otimes |\Sigma_{4}\rangle \langle \Sigma_{4}|\right) \right]$, where $|\Sigma_{3}\rangle \langle \Sigma_{3}| =  |\Sigma_{4}\rangle \langle \Sigma_{4}| = \frac{I}{2}$.}
\end{theorem}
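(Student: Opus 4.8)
The plan is to work entirely in the computational basis and track the off-diagonal structure of the reduced output states. First I would recall from Appendix A the explicit action of the Bu\v{z}ek--Hillery cloner on a single qubit: writing the input qubit's density matrix in the computational basis, the cloner produces a joint state on the clone and the ancilla whose two single-qubit reductions are $\tilde{\rho} = \eta\,\rho + (1-\eta)\tfrac{I}{2}$ for the universal symmetric scaling factor $\eta = 2/3$, together with fixed cross terms that couple clone and ancilla. Applying $U_l^1\otimes U_l^2$ to $\rho_{12}\otimes\tfrac{I}{2}\otimes\tfrac{I}{2}$ and tracing appropriately, I would write down the $4\times4$ matrices for $\tilde{\rho}_{12}$, $\tilde{\rho}_{34}$, $\tilde{\rho}_{13}$ and $\tilde{\rho}_{24}$ in terms of the input parameters $\overrightarrow{X}$, $\overrightarrow{Y}$, $T_{3\times3}$. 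Because local cloning acts as a tensor product, the resulting two-qubit output states should again be of the canonical form in Eq.~\eqref{eq:mix} with rescaled and rearranged parameters.

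The next step is to compute the $l_1$-norm coherences $C(\tilde{\rho}_{12})$, $C(\tilde{\rho}_{34})$, $C(\tilde{\rho}_{13})$, $C(\tilde{\rho}_{24})$ as sums of absolute values of the off-diagonal entries, and to express each as an explicit function of the input parameters. The key structural observation I expect is that the off-diagonal entries of all four output states are governed by essentially the \emph{same} set of underlying quantities (the components of $\overrightarrow{X}$, $\overrightarrow{Y}$ and the relevant $t_{ij}$ feeding into the $|01\rangle,|10\rangle$ and the coherence-carrying blocks), merely multiplied by different powers of the scaling factor $\eta$ and its complement. Concretely, I would aim to show that $C(\tilde{\rho}_{13})$ and $C(\tilde{\rho}_{24})$ vanish \emph{only} when certain linear combinations of $x_i$, $y_i$, $t_{ij}$ vanish --- and that precisely those same combinations are the ones appearing (up to the nonzero factor $\eta^2$ or $\eta$) in $C(\tilde{\rho}_{12})$ and $C(\tilde{\rho}_{34})$.

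From there the argument is a logical pincer: optimal broadcasting demands $C(\tilde{\rho}_{13}) = C(\tilde{\rho}_{24}) = 0$ \emph{and} $C(\tilde{\rho}_{12}) \neq 0$, $C(\tilde{\rho}_{34}) \neq 0$ simultaneously. I would show the first pair of conditions forces the offending parameter combinations to zero, which in turn kills the off-diagonal elements of $\tilde{\rho}_{12}$ and $\tilde{\rho}_{34}$, contradicting the requirement that they be coherent. Hence no choice of input parameters can meet all four conditions at once, proving impossibility. It is worth noting the argument must also handle the degenerate possibility that the input state $\rho_{12}$ itself is already incoherent in the computational basis (so that there is nothing to broadcast); this is consistent with, and subsumed by, the same parameter conditions.

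The main obstacle I anticipate is bookkeeping rather than conceptual: correctly propagating the ancilla cross terms of the B-H cloner through the four partial traces, and being careful that the off-diagonal blocks of $\tilde{\rho}_{13}$ and $\tilde{\rho}_{24}$ (which mix a ``real'' qubit with an ancilla) are written in the right basis ordering. The delicate point is to verify that the vanishing locus of $C(\tilde{\rho}_{13})$ (and of $C(\tilde{\rho}_{24})$) is contained in --- ideally equal to --- the vanishing locus of $C(\tilde{\rho}_{12})$ (and $C(\tilde{\rho}_{34})$); if the containment were strict in the wrong direction the theorem would fail, so this inclusion of zero-sets is the crux of the proof and is where I would spend the most care, checking it entry-by-entry across the $|01\rangle$--$|10\rangle$ coherence block and the outer $|00\rangle$--$|11\rangle$ block separately.
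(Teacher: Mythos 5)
Your computational plan would get you to the theorem, but the mechanism you anticipate is not the one the calculation actually reveals, and the ``key structural observation'' you propose to establish is false as stated. You expect the off-diagonal entries of all four output states to be governed by the same combinations of $x_i$, $y_i$, $t_{ij}$ (up to powers of the shrinking factor), so that forcing $C(\tilde{\rho}_{13})=C(\tilde{\rho}_{24})=0$ would kill $C(\tilde{\rho}_{12})$ and $C(\tilde{\rho}_{34})$ --- a pincer between the two halves of the optimality condition. In fact the local (clone--ancilla) pairs acquire an \emph{input-independent} off-diagonal contribution from the cloner's cross terms: in the paper's parametrization $\tilde{\rho}_{13}=\{\mu x,\mu x, T_l\}$ with $T_l=\mathrm{diag}(2\lambda,2\lambda,1-4\lambda)$, and the $2\lambda$ entries deposit a fixed amount of coherence into the $|01\rangle$--$|10\rangle$ block, giving $C(\tilde{\rho}_{13})=2\sqrt{x_1^2+x_2^2}+2\lambda(1-2\sqrt{x_1^2+x_2^2})\geq 2\lambda>0$ for every input and every $\lambda\in(0,1/2)$ (and $=1/3+\tfrac{4}{3}\sqrt{x_1^2+x_2^2}$ at the optimal point $\lambda=1/6$), with the same bound for $\tilde{\rho}_{24}$. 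So the vanishing locus whose containment you flag as ``the crux'' is simply empty: the first pair of conditions for optimal broadcasting is unsatisfiable on its own, no pincer with the non-local outputs is needed, and your worry that ``if the containment were strict in the wrong direction the theorem would fail'' is moot. Your argument survives only vacuously; if you insist on first proving the claimed structural parallelism between the four outputs, that step will fail. Redirect the effort to computing $C(\tilde{\rho}_{13})$ explicitly and exhibiting the positive $2\lambda$ floor, which is exactly what the paper does.
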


\begin{proof} 
\noindent Here, we consider the input state as the most general two qubit mixed state $\rho_{12}$. The  local unitary part  $U_{l}^{1}\otimes U_{l}^{2}$ of the B-H cloning transformation (see Appendix A) is applied on qubits of $\rho_{12} = \{ x, y, T \}$. We have local output states 
as, $\tilde{\rho}_{13} = \{\mu x, \mu x, T_{l} \} $ and $\tilde{\rho}_{24} = \{\mu y, \mu y, T_{l} \} $, where $T_{l} = diag(2\lambda, 2\lambda, 1-4\lambda)$ is its 3x3 diagonal correlation matrix and the non-local output states 
$\tilde{\rho}_{12}=\tilde{\rho}_{34}=\{\mu x, \mu y, \mu^2T\}$. Here, $\mu=1-2\lambda$  and $\lambda$ being the cloning machine parameter (See Appendix A) and $x,y$ represent the Bloch vectors and $T$ represents the 3x3 correlation matrix of the input state. The coherence calculated by using $l_1$ norm of the local output states in the computational basis is given by,
$C(\tilde{\rho}_{13})=2\sqrt{x_1^2+x_2^2}+2\lambda(1-2\sqrt{x_1^2+x_2^2})>0$ on Alice's side and $C(\tilde{\rho}_{24})=2\sqrt{y_1^2+y_2^2}+2\lambda(1-2\sqrt{y_1^2+y_2^2})>0$ on Bob's side. Each of these quantities are positive as the norm $||\vec{v}|| = \sqrt{v_1^2+v_2^2}$ for a 2d vector $\vec{v} = \{v_{1}, v_{2}\}$ is a non-negative quantity and in the minimum case of $x_{1} = x_{2} = 0$ and $y_{1} = y_{2} = 0$, still, $C(\tilde{\rho}_{13}) = C(\tilde{\rho}_{24}) = 2\lambda > 0$. This is true for 
all values of $\lambda$ lying between $0$ and $1/2$.\\

\noindent Now for $\lambda = \frac{1}{6}$, the above B-H local cloner 
reduces to state independent B-H optimal local cloner. Subsequently, the coherence of local output states changes to,
$C(\tilde{\rho}_{13})=1/3+(4/3)\sqrt{x_1^2+x_2^2})>0$ and $C(\tilde{\rho}_{24})=1/3+(4/3)\sqrt{y_1^2+y_2^2})>0$. Since the coherence of local output states is non-vanishing, it is clear enough that it is impossible to optimally broadcast quantum coherence via local cloning.
\end{proof}

\noindent \begin{theorem}
\label{theo:nonlocal_broad}
\textit{Given the two-qubit general mixed state $\rho_{12}$ and B-H non-local cloning 
transformations 
(see Appendix A), 
it is impossible to optimally broadcast the coherence within $\rho_{12}$ into two coherent states: $\tilde{\rho}_{12}$ = $Tr_{34}\left[U^{nl}_{bh}\left(\rho_{12}\otimes \rho_{34}\right)\right]$ and $\tilde{\rho}_{34}$ = $Tr_{12}\left[U^{nl}_{bh}\left(\rho_{12}\otimes \rho_{34}\right) \right]$, where $\rho_{34} = \frac{I}{4}$.}
\end{theorem}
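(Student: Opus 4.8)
\noindent\emph{Proof strategy.} The plan is to run the same argument as in Theorem~\ref{theo:local_broad_no_optimal}, obstructing optimal broadcasting by showing that the two \emph{local} output pairs are never incoherent. First I would feed the canonical input state $\rho_{12}=\{\vec{x},\vec{y},T\}$ together with the blank $\rho_{34}=\tfrac{I}{4}$ into the four--qubit unitary part $U_{nl}$ of the B--H non-local cloner (Appendix A), and take the two partial traces to bring $\tilde{\rho}_{13}=Tr_{24}\left[U_{nl}\left(\rho_{12}\otimes\rho_{34}\right)\right]$ and $\tilde{\rho}_{24}=Tr_{13}\left[U_{nl}\left(\rho_{12}\otimes\rho_{34}\right)\right]$ into canonical Bloch form $\{\vec{a},\vec{b},S\}$. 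Since the non-local cloner copies the first subsystem onto qubits $1,3$ and the second onto $2,4$, the one-qubit marginals of $\tilde{\rho}_{13}$ come out as a fixed rescaling of $\vec{x}$ (and those of $\tilde{\rho}_{24}$ of $\vec{y}$), while the copy--copy correlation matrix $S$ inherits, on top of the rescaled input correlations, a machine-dependent isotropic piece $\kappa\,(\sigma_1\otimes\sigma_1+\sigma_2\otimes\sigma_2)$ produced by the symmetric entangling structure of the cloner acting on the maximally mixed blank --- the exact counterpart of the correlation matrix $T_{l}=\mathrm{diag}(2\lambda,2\lambda,1-4\lambda)$ appearing in the proof of Theorem~\ref{theo:local_broad_no_optimal}, with $\kappa>0$ on the whole admissible range of the cloning parameter.

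\noindent Next I would evaluate $C(\tilde{\rho}_{13})$ and $C(\tilde{\rho}_{24})$ with the $l_1$ norm in the computational basis, using the standard dictionary between the off-diagonal entries of a two-qubit density matrix and its $\{\vec{a},\vec{b},S\}$ data: the $|01\rangle\langle10|$ entry carries $S_{11}+S_{22}$, the $|00\rangle\langle11|$ entry carries $S_{11}-S_{22}$, and the remaining four off-diagonal entries carry the first two Bloch components of the two one-qubit marginals. Feeding in the form found above, the coherence of each local output reduces to the sum of a term that vanishes with $(x_1,x_2)$ (resp.\ $(y_1,y_2)$) and a strictly positive machine term proportional to $\kappa$ --- precisely as $C(\tilde{\rho}_{13})=2\sqrt{x_1^2+x_2^2}+2\lambda(1-2\sqrt{x_1^2+x_2^2})$ arose in the local case. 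Hence $C(\tilde{\rho}_{13})>0$ and $C(\tilde{\rho}_{24})>0$ for every value of the cloning-machine parameter, in particular for the optimal one, so the optimality requirements $C(\tilde{\rho}_{13})=C(\tilde{\rho}_{24})=0$ can never be satisfied and optimal non-local broadcasting of coherence is impossible.

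\noindent\textbf{Expected main obstacle.} The delicate step is the first one. Unlike the factorized transformation $U_{l}^{1}\otimes U_{l}^{2}$, the non-local unitary $U_{nl}$ correlates all four qubits simultaneously, so putting $\tilde{\rho}_{13}$ and $\tilde{\rho}_{24}$ in closed Bloch form means carrying the full $U_{nl}$ action (including its action on $I/4$) through both partial traces, and --- crucially --- checking that the $\sigma_1\otimes\sigma_1$ and $\sigma_2\otimes\sigma_2$ entries of the resulting copy--copy correlation matrix are genuinely nonzero. If the cloner happened to generate only a $\sigma_3\otimes\sigma_3$ correlation between the two copies, the computational-basis coherence of the local outputs could vanish and the argument would break down; establishing $\kappa>0$ (equivalently, that the copy--copy output is not diagonal in the computational basis) is therefore the real content of the proof, while everything downstream of it is bookkeeping identical in spirit to Theorem~\ref{theo:local_broad_no_optimal}.
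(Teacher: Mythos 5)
Your proposal follows essentially the same route as the paper: compute the local output states $\tilde{\rho}_{13}$ and $\tilde{\rho}_{24}$ of the non-local B--H cloner in Bloch form, observe that their correlation matrix carries a machine-induced $2\lambda$ contribution in the $\sigma_1\otimes\sigma_1$ and $\sigma_2\otimes\sigma_2$ entries (the paper's $T_{nl}=\mathrm{diag}(2\lambda,2\lambda,1-8\lambda)$, i.e.\ your $\kappa=2\lambda>0$), and conclude that the $l_1$-norm coherence $C(\tilde{\rho}_{13})=2\sqrt{x_1^2+x_2^2}+2\lambda(1-4\sqrt{x_1^2+x_2^2})$ is bounded below by $2\lambda>0$, so the optimality condition can never hold. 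The step you flag as the main obstacle is exactly the computation the paper performs, with the stated output form confirming your claim that $\kappa>0$ throughout the admissible parameter range.
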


\begin{proof} 
\noindent Here, we start with most general version of the two qubit mixed state $\rho_{12} = \{ x, y, T \}$. When  non-local unitary part $U_{bh}^{nl}$ or $U_{nl}$ (see Fig \ref{NL}) of the B-H cloning transformation is applied on  qubits of  $\rho_{12}$, then, we have local output states
as, $\tilde{\rho}_{13} = \{\mu x, \mu x, T_{nl}\}$ and $\tilde{\rho}_{24} = \{\mu y, \mu y, T_{nl}\}$, where $T_{nl} = diag(2\lambda, 2\lambda, 1-8\lambda)$ is its 3x3 diagonal correlation matrix and the non-local output states 
$\tilde{\rho}_{12}=\tilde{\rho}_{34}=\{\mu x, \mu y, \mu T\}$. Here, $\mu=1-4\lambda$ with $\lambda$ being the cloning machine parameter (see Appendix A) and $x,y$ represent the Bloch vectors and $T$ represents the correlation matrix of the input state. The coherence is calculated by using $l_1$ norm measure in the computational basis. Hence, for the local output states, we have,
$C(\tilde{\rho}_{13})=2\sqrt{x_1^2+x_2^2}+2\lambda(1-4\sqrt{x_1^2+x_2^2})>0$ on Alice's side and $C(\tilde{\rho}_{24})=2\sqrt{y_1^2+y_2^2}+2\lambda(1-4\sqrt{y_1^2+y_2^2})>0$ on Bob's side. Each of these quantities are positive as the norm $||\vec{v}|| = \sqrt{v_1^2+v_2^2}$ for a 2d vector $\vec{v} = \{v_{1}, v_{2}\}$ is a non-negative quantity and in the minimum case of $x_{1} = x_{2} = 0$ and $y_{1} = y_{2} = 0$, still, $C(\tilde{\rho}_{13}) = C(\tilde{\rho}_{24}) = 2\lambda > 0$. This is true for 
all values of $\lambda$ lying between $0$ and $1/4$.\\

\noindent Now, for $\lambda= \frac{1}{10}$, the above B-H non-local cloning machine 
reduces to state independent B-H optimal non-local cloner. Subsequently, the coherence of local output states changes to,
$C(\tilde{\rho}_{13})=1/5+(6/5)\sqrt{x_1^2+x_2^2})>0$ and $C(\tilde{\rho}_{24})=1/5+(6/5)\sqrt{y_1^2+y_2^2})>0$. Therefore, we see that it is impossible to broadcast quantum coherence optimally as we have non-vanishing quantum coherence for the local output states.\\
\end{proof}

\noindent At this point, we note that it is impossible to optimally  broadcast coherence between the desired qubits. However, we cannot rule out the entire possibility of broadcasting. If we relax the condition $C(\tilde{\rho}_{13}) = 0, C(\tilde{\rho}_{24}) = 0$, then we show that it will be possible to broadcast coherence non-optimally.\\

\noindent \textbf{Definition : Non-Optimal Broadcasting (Local)} 
A coherent state $\rho_{12}$ can be broadcasted non-optimally after the application of local unitary part  $U_{l}^{1} \otimes U_{l}^{2}$ (see Fig \ref{L})of the  cloning operation, if for some values of the input state parameters, the coherence of non-local output states is greater than 
coherence of local output states. Mathematically, for non-optimal broadcasting via local cloning, we should have the following conditions :
\begin{itemize}
    \item $C(\tilde{\rho}_{12}) > C(\tilde{\rho}_{13})$ and $C(\tilde{\rho}_{12}) > C(\tilde{\rho}_{24})$
    \item $C(\tilde{\rho}_{34}) > C(\tilde{\rho}_{13})$ and $C(\tilde{\rho}_{34}) > C(\tilde{\rho}_{24})$
\end{itemize}

\noindent \textbf{Definition : Non Optimal Broadcasting (Non-local)} 
A coherent state $\rho_{12}$ can be broadcasted non-optimally after the application of non-local unitary part $U_{nl}$ (see Fig \ref{NL}) of the cloning operation, if for some values of the input state parameters, the coherence of non-local output states is greater than 
coherence of local output states. Mathematically, for non-optimal broadcasting via non-local cloning, we should have the following conditions :
\begin{itemize}
    \item $C(\tilde{\rho}_{12}) > C(\tilde{\rho}_{13})$ and $C(\tilde{\rho}_{12}) > C(\tilde{\rho}_{24})$
    \item $C(\tilde{\rho}_{34}) > C(\tilde{\rho}_{13})$ and $C(\tilde{\rho}_{34}) > C(\tilde{\rho}_{24})$
\end{itemize}

\noindent Under this relaxation, the quantum coherence of the non-local output states $\tilde{\rho}_{12}$ and $\tilde{\rho}_{34}$ can either increase or decrease from the initial resource state $\rho_{12}$, which implies the following two scenarios after broadcasting :

\begin{itemize}
\item $C\left(\tilde{\rho}_{34}\right)$ $<$ $C\left(\rho_{12}\right)$ and $C\left(\tilde{\rho}_{12}\right)$ $<$ $C\left(\rho_{12}\right)$
\item $C\left(\tilde{\rho}_{34}\right)$ $>$ $C\left(\rho_{12}\right)$ and $C\left(\tilde{\rho}_{12}\right)$ $>$ $C\left(\rho_{12}\right)$
\end{itemize}

\noindent Interestingly, in this work we show that it is indeed impossible to increase the quantum coherence by both non-local and local cloning operation. We present this result in form of the following theorems by taking the two-qubit most general state (described in Eq \eqref{eq:mix}) as the input state. We have measured the coherence in the following proofs in computational basis without any loss of generality.

\noindent \begin{theorem}
\label{theo:local_broad}
\textit{Given a two qubit general mixed state $\rho_{12}$ and B-H local cloning transformation 
(see Appendix A), we can always create two 
states  $\tilde{\rho}_{12}$ and $\tilde{\rho}_{34}$, such that :
(a) $C(\tilde{\rho}_{12}) < C(\rho_{12})$
(b) $C(\tilde{\rho}_{34}) < C(\rho_{12})$ in the computational basis $\{|0\rangle,|1\rangle\}$, where $\mu = 1- 2 \lambda$, with $\lambda$ being the machine parameter.}
\end{theorem}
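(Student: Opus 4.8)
The plan is to push everything through the explicit action of the cloner on the canonical coordinates $\{x,y,T\}$ and then reduce the inequality to a one--variable convexity estimate. First I would recall from Appendix~A --- this is exactly the computation already invoked in the proof of Theorem~\ref{theo:local_broad_no_optimal} --- that the local B--H unitary sends $\rho_{12}=\{x,y,T\}$ to $\tilde\rho_{12}=\tilde\rho_{34}=\{\mu x,\mu y,\mu^{2}T\}$ with $\mu=1-2\lambda\in(0,1)$ for $\lambda\in(0,1/2)$. Since $\tilde\rho_{12}$ and $\tilde\rho_{34}$ are literally the same state, parts (a) and (b) coincide and it suffices to bound $C(\tilde\rho_{12})$.

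\noindent Next I would write down the $l_1$ coherence of a generic two--qubit state $\{x,y,T\}$ in the computational basis: expanding the Pauli form into the $4\times4$ density matrix, the six independent off--diagonal entries combine in Hermitian--conjugate pairs into Euclidean lengths, giving
\[
C(\{x,y,T\})=\tfrac12\Big[\Phi(\vec u_{1},\vec v_{1})+\Phi(\vec u_{2},\vec v_{2})+\|\vec w_{1}\|+\|\vec w_{2}\|\Big],
\]
where $\Phi(\vec a,\vec b):=\|\vec a+\vec b\|+\|\vec a-\vec b\|$, $\vec u_{1}=(x_{1},x_{2})$, $\vec v_{1}=(t_{13},t_{23})$, $\vec u_{2}=(y_{1},y_{2})$, $\vec v_{2}=(t_{31},t_{32})$, $\vec w_{1}=(t_{11}-t_{22},t_{12}+t_{21})$, $\vec w_{2}=(t_{11}+t_{22},t_{12}-t_{21})$. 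Feeding in $x\mapsto\mu x$, $y\mapsto\mu y$, $T\mapsto\mu^{2}T$ and factoring out a common $\mu$ from each mixed block, the first two blocks become $\mu\,\Phi(\vec u_{1},\mu\vec v_{1})$ and $\mu\,\Phi(\vec u_{2},\mu\vec v_{2})$, while the correlation--only block scales homogeneously to $\mu^{2}(\|\vec w_{1}\|+\|\vec w_{2}\|)$.

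\noindent The key estimate --- and the step I expect to be the real obstacle --- is that for fixed $\vec a,\vec b\in\mathbb R^{2}$ the function $s\mapsto\Phi(\vec a,s\vec b)$ is convex (a sum of norms of affine functions of $s$) and even, hence non--decreasing on $[0,\infty)$. Consequently, for $0<\mu\le1$,
\[
\mu\,\Phi(\vec a,\mu\vec b)\;\le\;\Phi(\vec a,\mu\vec b)\;\le\;\Phi(\vec a,\vec b),
\]
and trivially $\mu^{2}(\|\vec w_{1}\|+\|\vec w_{2}\|)\le\|\vec w_{1}\|+\|\vec w_{2}\|$. Summing the three blocks gives $C(\tilde\rho_{12})\le C(\rho_{12})$. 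I would stress that the $\pm$ pairing must be kept intact: the naive term--by--term bound $\mu\|\vec a+\mu\vec b\|\le\|\vec a+\vec b\|$ is false when $\vec a$ and $\vec b$ are nearly antiparallel, so it is precisely the evenness of $\Phi(\vec a,\cdot)$, which forces monotonicity in $s$, that rescues the argument.

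\noindent Finally, for strictness: equality in $\mu\,\Phi(\vec a,\mu\vec b)\le\Phi(\vec a,\mu\vec b)$ with $\mu<1$ forces $\Phi(\vec a,\mu\vec b)=0$, i.e.\ $\vec a=\vec b=0$; equality in the $\mu^{2}$ step forces $\vec w_{1}=\vec w_{2}=0$. Hence $C(\tilde\rho_{12})=C(\rho_{12})$ could occur only if $\vec u_{1}=\vec u_{2}=\vec v_{1}=\vec v_{2}=\vec w_{1}=\vec w_{2}=0$, i.e.\ only if $C(\rho_{12})=0$. Since broadcasting of coherence presupposes a coherent input, $C(\rho_{12})>0$, so the inequality is strict, $C(\tilde\rho_{12})<C(\rho_{12})$; and because $\tilde\rho_{34}=\tilde\rho_{12}$ this simultaneously establishes (b).
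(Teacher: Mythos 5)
Your proof is correct, and its skeleton coincides with the paper's: you use the identical block decomposition of the $l_1$-norm (a correlation-only pair $\|\vec w_1\|+\|\vec w_2\|$ that scales homogeneously by $\mu^2$, plus two Bloch--correlation pairs), and the identical two-step chain $\mu\,\Phi(\vec a,\mu\vec b)\le\Phi(\vec a,\mu\vec b)\le\Phi(\vec a,\vec b)$, which is exactly the paper's $b_2\le\tilde b_2\le a_2$. Where you genuinely differ is in how the middle inequality is established. The paper proves $\tilde b_2\le a_2$ via the plane-geometry lemma of Appendix B: with $A=(t_{13},t_{23})$, $B=(x_1,x_2)$, $C=-B$ and $D=\mu A$, the point $D$ lies inside $\Delta ABC$, whence $\|DB\|+\|DC\|<\|AB\|+\|AC\|$. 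You instead note that $s\mapsto\|\vec a+s\vec b\|+\|\vec a-s\vec b\|$ is convex (norms of affine functions of $s$) and even, hence non-decreasing on $[0,\infty)$. Your route buys two things. First, uniform validity: the geometric lemma as stated needs $D$ to be strictly interior to a non-degenerate triangle, which fails when $A$, $B$, $C$ are collinear, when $B=0$ (so $B=C$), or when $A=0$ (so $D=A$); the convexity argument needs no case analysis. Second, it makes the equality cases transparent, which you exploit to show the inequality is strict exactly when $C(\rho_{12})>0$ --- a point the paper glosses over (it asserts $\mu^2 a_1<a_1$ even though this fails when $a_1=0$). Your side remark that the naive term-by-term bound $\mu\|\vec a+\mu\vec b\|\le\|\vec a+\vec b\|$ can fail for nearly antiparallel $\vec a,\vec b$ is also a correct and worthwhile check on why the $\pm$ pairing must be kept intact; it explains why both the paper and you must treat each pair as a unit rather than bounding individual off-diagonal entries.
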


\begin{proof} 
\noindent We start with most general representation of two qubit mixed state $\rho_{12} = \{ x, y, T\}$ . When Buzek Hillary local cloning transformation $U_{l}^{1} \otimes U_{l}^{2}$ is applied to  $\rho_{12}$, we get the non-local output states as 
$\tilde{\rho}_{12} =\tilde{\rho}_{34}=\{\mu x, \mu y, \mu^2 T\}$. Here, $\mu=1-2\lambda$ with $\lambda$ being the cloning machine parameter and $x,y$ represent the Bloch vectors and $T=[t_{ij}]_{3 \times 3}$ represents the correlation matrix of the input state. Using $l_1$-norm as measure of coherence in the computational basis, we have,
$C(\rho_{12}) 
= 1/2[a_1 + a_2+ a_3]$
and 
$C(\tilde{\rho}_{12}) = C(\tilde{\rho}_{34})  = 1/2[b_1 + b_2 + b_3]$.

\noindent Here,
\begin{eqnarray}
a_1 &=& \sqrt{(t_{13}+t_{21})^2+(t_{11}-t_{22})^2}{}\nonumber\\&& +\sqrt{(t_{12}-t_{21})^2+(t_{11}+t_{22})^2}{}\nonumber\\
a_2 &=& \sqrt{(t_{13}-x_1)^2+(t_{23}-x_2)^2}{}\nonumber\\&&
+ \sqrt{(t_{13}+x_1)^2+(t_{23}+x_2)^2}{}\nonumber\\
a_3 &=& \sqrt{(t_{31}-y_1)^2+(t_{32}-y_2)^2}{}\nonumber\\ &&
+\sqrt{(t_{31}+y_1)^2+(t_{32}+y_2)^2}{}\nonumber\\
b_1 &=& \mu^2\sqrt{(t_{13}+t_{21})^2+(t_{11}-t_{22})^2}{}\nonumber\\ &&
+ \mu^2\sqrt{(t_{12}-t_{21})^2+(t_{11}+t_{22})^2}{}\nonumber\\
b_2 &=& \mu\sqrt{(\mu t_{13}-x_1)^2+(\mu t_{23}-x_2)^2}{}\nonumber\\&&
+\mu\sqrt{(\mu t_{13}+x_1)^2+(\mu t_{23}+x_2)^2}{}\nonumber\\
b_3 &=& \mu\sqrt{(\mu t_{31}-y_1)^2+(\mu t_{32}-y_2)^2}{}\nonumber\\&&
+\mu\sqrt{(\mu t_{31}+y_1)^2+(\mu t_{32}+y_2)^2}.{}\nonumber\\&&
\end{eqnarray}
\noindent If we can show that the conditions $a_1 > b_1$, $a_2 > b_2$ and $a_3 > b_3$ hold true at the same time, then we can always conclude that
$C({\rho}_{12}) $ $>$ $C(\tilde{\rho}_{12})$ and $C({\rho}_{12})  > C(\tilde{\rho}_{34})$.\\

\noindent  At first, if we observe the terms $a_1$ and $b_1$ carefully, we find that $b_1$ $=$ $\mu^2 a_1$ $<$ $a_1$ because $\mu=1-2\lambda < 1$ and $a_1 \geq 0$.\\

\noindent Now, we need to compare $a_2$ and $b_2$. If we observe the terms in $a_2$ and $b_2$, and ignoring the constant $1/2$, they can be thought of as distance between two points in a Cartesian Plane. Let us denote the distance between two points A and B in a 2-dimensional space (like XY Cartesian Plane) by $\|AB\|$. Let us consider
$\tilde{b_2} = \sqrt{(\mu t_{13}-x_1)^2+(\mu t_{23}-x_2)^2}{}+\sqrt{(\mu t_{13}+x_1)^2+(\mu t_{23}+x_2)^2}{}$ and as $\mu<1$, so, we have, $b_2 < \tilde{b_2}$. Now, assume 4 points in Cartesian Plane, namely, $A(t_{13},t_{23})$, $B(x_1,x_2)$,$C(-x_1,-x_2)$ and $D(\mu t_{13},\mu t_{23})$ without $O$ as the origin. Now, as $\mu<1$, hence, it is always true that $\|OD\| = \sqrt{(\mu t_{13})^2+(\mu t_{23})^2}{} = \mu\sqrt{(t_{13})^2+( t_{23})^2}{} = \mu \|OA\|$; that is, $\|OD\| < \|OA\|$. So, D always lies on $\overline{OA}$, inside $\Delta ABC$.
\begin{figure}[htbp]
\begin{center}
\[
\begin{array}{cc}
\includegraphics[height=7cm,width=6.5cm]{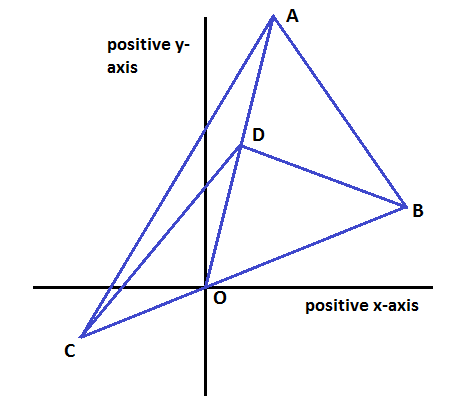} 
\end{array}
\]
\end{center}
\caption{\noindent \scriptsize
$\Delta ABC$  in XY-Plane, with point D in its interior}
\label{triangle}
\end{figure}
\noindent Now, from the Figure \ref{triangle}, we have, $\tilde{b_2}=\|DC\| + \|DB\|$ and $a_2 =\|AC\| + \|AB\|$. Now, we have proved $\tilde{b_2} < a_2$ in the Appendix B by plane geometry.\\

\noindent Hence,we have, $ b_2 < \tilde{b_2} < a_2$. Similarly, we can show $ b_3 < a_3$. Hence, we have proved that $ b_1 + b_2 + b_3 < a_1 + a_2 + a_3$.

\end{proof}

\noindent \begin{theorem}
\label{theo:non_local_broad}
\textit{Given a two qubit general mixed state $\rho_{12}$ and B-H non-local cloning transformation 
(see Appendix A), we can always create two states $\tilde{\rho}_{12}$ and $\tilde{\rho}_{34}$, such that 
:(a) $C(\tilde{\rho}_{12}) = \mu C(\rho_{12})< C(\rho_{12}) $
(b) $C(\tilde{\rho}_{34}) = \mu C(\rho_{12})<C(\rho_{12})$ in the computational basis, where $\mu = 1- 4 \lambda$, with $\lambda$ being the machine parameter.}
\end{theorem}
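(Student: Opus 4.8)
The plan is to read off the non-local output states from the proof of Theorem~\ref{theo:nonlocal_broad}: after applying the non-local part $U_{nl}$ of the Buzek--Hillery cloner to $\rho_{12}=\{x,y,T\}$ with blank state $\rho_{34}=I/4$, the two non-local marginals coincide and are given by $\tilde{\rho}_{12}=\tilde{\rho}_{34}=\{\mu x,\mu y,\mu T\}$ with $\mu=1-4\lambda$. The decisive structural feature here --- and the reason this theorem is far shorter than its local analogue, Theorem~\ref{theo:local_broad} --- is that the Bloch vectors \emph{and} the full correlation matrix are rescaled by one and the same factor $\mu$, rather than by $\mu$ and $\mu^2$ separately, so no triangle-geometry argument is needed.

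First I would rewrite $\tilde{\rho}_{12}$ in the Fano form of Eq.~\eqref{eq:mix} and observe that $\tilde{\rho}_{12}=\mu\,\rho_{12}+(1-\mu)\,\tfrac{1}{4}(I\otimes I)$, i.e.\ the output is the affine combination of the input state with the two-qubit maximally mixed state. Since $\tfrac14(I\otimes I)$ is diagonal in every product basis, in particular in the computational basis $\{|00\rangle,|01\rangle,|10\rangle,|11\rangle\}$ in which coherence is measured, it contributes nothing to any off-diagonal matrix element. Hence for every pair $i\neq j$ one has $\langle i|\tilde{\rho}_{12}|j\rangle=\mu\,\langle i|\rho_{12}|j\rangle$, and therefore $|\langle i|\tilde{\rho}_{12}|j\rangle|=\mu\,|\langle i|\rho_{12}|j\rangle|$ because $\mu>0$ on the admissible parameter range $0<\lambda<1/4$ of the B-H non-local machine (Appendix A), which forces $0<\mu=1-4\lambda<1$.

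Summing these equalities over all off-diagonal entries gives $C(\tilde{\rho}_{12})=\sum_{i\neq j}|\langle i|\tilde{\rho}_{12}|j\rangle|=\mu\sum_{i\neq j}|\langle i|\rho_{12}|j\rangle|=\mu\,C(\rho_{12})$, which is part (a); part (b) is identical since $\tilde{\rho}_{34}=\tilde{\rho}_{12}$. The strict inequality $\mu\,C(\rho_{12})<C(\rho_{12})$ then follows from $0<\mu<1$ together with the standing assumption that the input is coherent, $C(\rho_{12})>0$. The only thing requiring care --- and it is not really an obstacle --- is to confirm from the cloner data in Appendix A that the uniform $\mu$-scaling of $\{x,y,T\}$ is exact and that $\mu$ is non-negative throughout the parameter range; everything else is an immediate consequence of the linearity of the $l_1$ norm in the off-diagonal entries of the density matrix.
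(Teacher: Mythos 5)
Your proposal is correct and follows essentially the same route as the paper: both read off the uniformly rescaled output $\tilde{\rho}_{12}=\tilde{\rho}_{34}=\{\mu x,\mu y,\mu T\}$ and conclude that every off-diagonal computational-basis element, and hence the $l_1$-norm coherence, is multiplied by the single factor $\mu=1-4\lambda\in(0,1)$. Your reformulation $\tilde{\rho}_{12}=\mu\,\rho_{12}+(1-\mu)\tfrac{1}{4}(I\otimes I)$ is just a cleaner way of stating what the paper shows by writing out the explicit sum $X$ of square-root terms and factoring $\mu$ out of each one; the only point you add beyond the paper is the (correct) remark that strictness of the inequality requires $C(\rho_{12})>0$.
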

\begin{proof} First of all we apply  Buzek Hillary non-local cloning transformation 
the 2- qubit general mixed quantum states $\rho_{12} = \{ x, y, T \}$. The non-local output states are given by, 
$\tilde{\rho}_{12}=\tilde{\rho}_{34}=\{\mu x, \mu y, \mu T\}$. Here, $\mu=1-4\lambda$ with $\lambda$ being the cloning machine parameter, $x,y$ represent the Bloch vectors and 
$T=[t_{ij}]_{3 \times 3}$ represents the correlation matrix of the input state. The coherence present in these states, calculated in the computational 
basis by using the $l_1$ norm, are given by,
$C(\rho_{12}) = \frac{1}{2}X$
and 
$C(\tilde{\rho}_{12}) = C(\tilde{\rho}_{34}) = 
\frac{\mu}{2} X$,
where, 
\begin{eqnarray}
X&=&[\sqrt{(t_{13}+t_{21})^2+(t_{11}-t_{22})^2} {}\nonumber\\&&
+ \sqrt{(t_{12}-t_{21})^2+(t_{11}+t_{22})^2}{}\nonumber\\&&
+\sqrt{(t_{13}-x_1)^2+(t_{23}-x_2)^2}{}\nonumber\\&&
+\sqrt{(t_{13}+x_1)^2+(t_{23}+x_2)^2}{}\nonumber\\&&
+\sqrt{(t_{31}-y_1)^2+(t_{32}-y_2)^2}{}\nonumber\\&&
+\sqrt{(t_{31}+y_1)^2+(t_{32}+y_2)^2}]
\end{eqnarray}
\noindent Now, we know, $\mu = 1 - 4\lambda < 1$ and   $\lambda < 1/4$. So, clearly, $C(\tilde{\rho}_{12}) = \mu C(\rho_{12}) < C(\rho_{12})$ and  
$C(\tilde{\rho}_{34}) = \mu C(\rho_{12}) < C(\rho_{12})$. Hence, we have created two correlations across the labs of Alice and Bob with lesser coherence compared to that of the input state via non-local cloning.
\end{proof}

\section{III. Non-Optimal broadcasting for particular mixed states}

\noindent In this section, we consider the standard examples of two qubit mixed state as a resource state. These include states representing a mixture of the most coherent state (MCS) and the most incoherent state (MIS), and Bell diagonal states (BDS). We apply the standard BH cloner non-locally, to find the range where non-optimal broadcasting is possible, in terms of input state parameters. In all the examples, we have chosen computational basis for measuring coherence via $l_1$-norm.\\
 
\noindent \textbf{\textit{\centerline{Example: A. Mixture of MIS and MCS:}}}\\

\noindent Here, we consider a two-qubit state representing a convex mixture of the most coherent two-qubit state  ($\rho_{MCS}$) and the most incoherent two-qubit state ($\rho_{MIS}$) in computational basis. In computational basis, we have, $\rho_{MIS} = \frac{I}{4}$, where $I$ is a 4x4 identity matrix and $\rho_{MCS} = |MCS\rangle\langle MCS|$, where $|MCS\rangle = \frac{1}{2}(|00\rangle + |01\rangle + |10\rangle + |11\rangle)$. Now, assuming $p$ as the mixing parameter, we have $\rho_{12} = p \rho_{MCS} + (1 - p) \rho_{MIS}$, where $0 \leq p \leq 1$. Formally, we represent this class of states in terms of bloch vectors as $\rho_{12} = \{ \vec{x}^{MIX},\vec{y}^{MIX}, T^{MIX} \}$, where $\vec{x}^{MIX} = \{p, 0, 0\}$ and $\vec{y}^{MIX} = \{p, 0, 0\}$ are the Bloch vectors, and $T^{MIX}= \left(\begin{smallmatrix}
p&0&0\\ 0&0&0\\0&0&0
\end{smallmatrix} \right)$ is 3x3 correlation matrix with $p$ being the same mixing parameter.\\

\textit{ \centerline{Local Cloning} }\\

\noindent Here, we use the general local cloning machine 
(see Appendix A) to obtain the local output states as, $\tilde{\rho}_{13} = \big\{\mu\vec{x^{MIX}},\mu\vec{x^{MIX}},T_{l}\big\}$ and $\tilde{\rho}_{24} = \big\{\mu\vec{y^{MIX}},\mu\vec{y^{MIX}},T_{l}\big\}$  and the non-local output states as,
$\tilde{\rho}_{12} = \tilde{\rho}_{34} = \big\{\mu\vec{x^{MIX}},\mu\vec{x^{MIX}},\mu^2 T^{MIX}\big\}$.
\noindent Now, we have $\mu = 1-2\lambda$, with $\lambda$ being the cloning machine parameter and $T^{MIX}$ is same as the correlation matrix of $\rho_{12}$. Here, $x^{MIX}$ and $y^{MIX}$ are the input state bloch vectors for the $\rho_{12}$ as described above and $T_{nl} = diag[2\lambda, 2\lambda, 1-4\lambda]$ is the 3x3 diagonal correlation matrix of the local output states.\\

\noindent Using $l_1$-norm, the coherence for the local output states is given by, $C(\tilde{\rho}_{13}) = C(\tilde{\rho}_{24}) = 2\lambda + 2p(1 - 2\lambda)$; while, that of the non-local output states is given by,
$C(\tilde{\rho}_{12}) = C(\tilde{\rho}_{34}) = p(1 - 2 \lambda)(3 - 2 \lambda)$.
The expression for the condition of broadcasting in the general local cloning case reduces to
$p(1 - 2 \lambda)^{2} > 2\lambda$.\\

\noindent To obtain the state independent version, we substitute the value of $\lambda$ as $\frac{1}{6}$ (see Appendix A). In this case, for the local output states. we have $C(\tilde{\rho}_{13}) = C(\tilde{\rho}_{24}) = \frac{1 + 4p}{3}$; and for the non-local output states, we have $C(\tilde{\rho}_{12}) = C(\tilde{\rho}_{34}) = \frac{16 p}{9}$.
\noindent Hence, the  broadcasting in state independent local cloning case is possible when $p > \frac{3}{4}$.\\

\textit{ \centerline{Non-local Cloning} }\\

\noindent Here, we use the general non-local cloning machine 
(see Appendix A) to obtain the local output states as, $\tilde{\rho}_{13} = \big\{\mu\vec{x^{MIX}},\mu\vec{x^{MIX}},T_{nl}\big\}$ and $\tilde{\rho}_{24} = \big\{\mu\vec{y^{MIX}},\mu\vec{y^{MIX}},T_{nl}\big\}$ and the non-local output states as,
$\tilde{\rho}_{12} = \tilde{\rho}_{34} = \big\{\mu\vec{x^{MIX}},\mu\vec{x^{MIX}},\mu T^{MIX}\big\}$ .
\noindent Now, we have $\mu = 1-4\lambda$, with $\lambda$ being the cloning machine parameter and $T^{MIX}$ is same as the correlation matrix of $\rho_{12}$. Here, $x^{MIX}$ and $y^{MIX}$ are the input state bloch vectors for the $\rho_{12}$ as described above and $T_{nl} = diag[2\lambda, 2\lambda, 1-8\lambda]$ is the 3x3 diagonal correlation matrix of the local output states.\\

\noindent Using $l_1$-norm, the coherence for the local output states is given by, $C(\tilde{\rho}_{13}) = C(\tilde{\rho}_{24}) = 2\lambda + 2p(1 - 4\lambda)$; while, that of the non-local output states is given by,
$C(\tilde{\rho}_{12}) = C(\tilde{\rho}_{34}) = 3p(1 - 4\lambda)$.
The broadcasting in the general non-local cloning case is possible when 
$p > 2\lambda + 4p\lambda$.\\

\noindent To obtain the state independent version, we substitute the value of $\lambda$ as $\frac{1}{10}$ (see Appendix A). In this case, for the local output states. we have $C(\tilde{\rho}_{13}) = C(\tilde{\rho}_{24}) = \frac{1 + 6p}{5}$; and for the non-local output states, we have $C(\tilde{\rho}_{12}) = C(\tilde{\rho}_{34}) = \frac{9p}{5}$.
\noindent Hence, the condition for broadcasting in state independent non-local cloning case reduces to $p > \frac{1}{3}$.\\

\noindent Hence, in this example , it is evident that with respect to computational basis, non-local state independent cloning machine performs much better than state independent local cloning machine when it comes to broadcasting of coherence.\\

\noindent \textbf{\textit{ ~~~~ Example: B. Bell Diagonal States (BDS):}}\\

\noindent A Bell diagonal state is a 2-qubit state that is diagonal in the Bell basis. In other words, it is a mixture of 
the four Bell states. It can be written as ${\rho}_{12} = p_1 |\Phi^+\rangle\langle\Phi^+| + p_2 |\Psi^+\rangle\langle\Psi^+| + p_3|\Phi^- \rangle\langle\Phi^- | + p_4|\Psi^- \rangle\langle\Psi^- | $, 
where $ p_1 + p_2 + p_3 +p_4 = 1 $. A Bell-diagonal state is separable if all the probabilities are less or equal to $1/2$. 
After transforming ${\rho}_{12}$ from Bell-basis to computational basis $\{|0\rangle, |1\rangle\}$, we convert the 4-parameter description to a 3-parameter description via $\beta$ coordinate transformation. 
In the Beta coordinate system, the set of Bell-diagonal states can be visualized as a tetrahedron where the four Bell 
states are the corners. The following change of coordinate system makes the plotting of states easy, which is given by : $\beta_0 = \frac{1}{2} (p_1 + p_2 + p_3 + p_4)$,  $\beta_1 = \frac{1}{2} (p_1 - p_2 - p_3 + p_4)$
$\beta_2 = \frac{1}{\sqrt{2}} (p_1 - p_4)$, $\beta_3 = \frac{1}{\sqrt{2}} (p_2 - p_3)$.\\

\noindent The coordinate $\beta_0$ will always be equal to $1/2$, and  ($\beta_1$, $\beta_2$ ,$\beta_3$) can be plotted 
in 3D easily. In $\beta$-coordinate system, Bell diagonal states are mathematically described as , $\rho_{12} = \big\{\vec{0},\vec{0},T^b\big\}$, where $\vec{0}$ is the null vector and $T^b=diag[ \sqrt{2} (\beta_2 - \beta_3), -2 \beta_1 , \sqrt{2} (\beta_2 + \beta_3)]$ is the correlation matrix .The $\beta$-coordinate system has the advantage that two of the edges are parallel to axes of the coordinate-system.\\

\textit{\centerline{Local Cloning}}\\

\noindent When we apply the general local unitary part  ($U_{l}^{1} \otimes U_{l}^{2}$) (see Appendix A) of the cloning machine, the local output states are given by, $\tilde\rho_{13} = \big\{\vec{0}, \vec{0}, T_l\big\}$, $\tilde\rho_{24} = \big\{\vec{0}, \vec{0}, T_l\big\}$ and the non-local output states are $\tilde\rho_{12} = \tilde\rho_{34} = \big\{\vec{0}, \vec{0}, \mu^2 T^b \big\}$, where $\mu=1-2\lambda$ with $\lambda$ being the machine parameter. $T_{l}$ is a 3x3 diagonal matrix with the diagonal elements being $2\lambda$, $2\lambda$ and $1 - 4\lambda$ ,whereas,  $T^{b}$ is the input state correlation matrix for Bell diagonal state, as described earlier.\\

\noindent In case of general local cloner, the coherence of local output states is given by, $C(\tilde\rho_{13}) = C(\tilde\rho_{24}) = 2\lambda$, while, on the other hand, the coherence of non-local output states is given by, $C(\tilde\rho_{12}) = C(\tilde\rho_{34}) = \frac{|(2 \beta_2 - \beta_3) (1 - 2 \lambda)^2|}{\sqrt{2}}$. Now, in order to obtain the state independent version of the local cloner, we substitute $\lambda=1/6$ (see Appendix A). Hence, after substituting $\lambda=1/6$, we have $C(\tilde\rho_{13}) = C(\tilde\rho_{24}) = \frac{1}{3}$ and $C(\tilde\rho_{12}) = C(\tilde\rho_{34}) = \frac{4 |2 \beta_2 - \beta_3| }{9\sqrt{2}}$. The  condition of broadcasting in this case is $ 4 |2 \beta_2 - \beta_3| > 3\sqrt{2}$.\\

\begin{table}[ht!]
\begin{center}
    \begin{tabular}{| p{0.8cm} | p{0.8cm} | p{4.5cm} |}
    \hline
$\beta_1$ & $\beta_3$ & $\beta_2$            \\ \hline
0.15 & -0.2 & 0.430 $<$ $\beta_2$ $\leq$ 0.460 \\ \hline
0.15 & -0.15 & 0.455 $<$ $\beta_2$ $\leq$ 0.460  \\ \hline
0.2 & 0.2 & -0.495 $\leq$ $\beta_2$ $<$ -0.430   \\ \hline
0.2 & 0.15 & -0.495 $\leq$ $\beta_2$ $<$ -0.455   \\ \hline
0.2 & 0.1 & -0.495 $\leq$ $\beta_2$ $<$ -0.480 \\ \hline
0.2 & -0.1 & 0.480 $<$ $\beta_2$ $\leq$ 0.495  \\ \hline
0.2 & -0.2 & 0.430 $<$ $\beta_2$ $\leq$ 0.495  \\ \hline
0.3 & 0.05 & -0.566 $\leq$ $\beta_2$ $<$ -0.505 and 0.555 $<$ $\beta_2$ $\leq$ 0.566  \\ \hline
0.3 & 0.1 & -0.566 $\leq$ $\beta_2$ $<$ -0.480 \\ \hline
0.3 & -0.1 & 0.48033 $<$ $\beta_2$ $\leq$ 0.566  \\ \hline
0.4 & 0.05 & -0.636 $\leq$ $\beta_2$ $<$ -0.505 and 0.555 $<$ $\beta_2$ $\leq$ 0.636  \\ \hline

\end{tabular}
\end{center}
\caption{ \noindent \scriptsize Broadcasting ranges for local state-independent cloning in case of Bell-diagonal in terms of $\beta_1$ , $\beta_3$ and $\beta_2$ parameters.}
\label{tab:BellSSDwenerp=0.80}
\end{table}

\noindent In Table \ref{tab:BellSSDwenerp=0.80}, we have listed the broadcasting range for local cloning in terms of $\beta_1$, $\beta_2$ and $\beta_3$.We fix the values of $\beta_1$ and $\beta_3$ and substitute their respective values in the expression : $4 |2 \beta_2 - \beta_3| > 3\sqrt{2} $ in order to arrive at the desired range of values in terms of $\beta_2$. However, for certain values of  $\beta_1$ and  $\beta_3$, we observe discontinuities in the intervals corresponding to range of  $\beta_2$, which suggest the presence of a no-broadcast zone  in between. 
\\

\begin{figure}[htbp]
\begin{center}
\includegraphics[height=7.5cm,width=8cm]{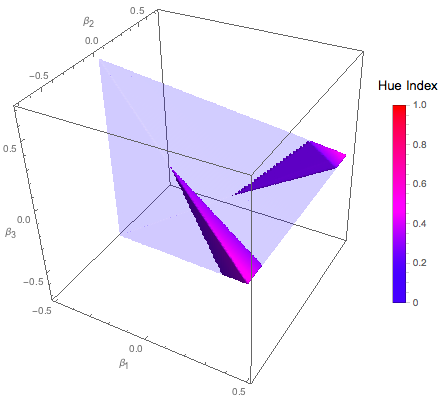}\\
\end{center}
\caption{\noindent \scriptsize Broadcasting of coherence in Bell-diagonal states via local state independent cloning.The Hue Index represents the trend in Hue with respect to value of non-local coherence ($C(\tilde\rho_{12})$ or $C(\tilde\rho_{34})$) after normalised in a scale of 0 to 1.}
\label{fig:Bell_SIDL}
\end{figure}

\noindent In Figure \ref{fig:Bell_SIDL}, we have superimposed the region of broadcasting on the tetrahedral-region hosting all Bell-diagonal states in $\beta$-coordinate system, with axes $\beta_1$, $\beta_2$ and $\beta_3$.The light blue region represents the no-broadcast zone, while the two regions on the dark blue - magenta spectrum, along the edges of the two corners of the tetrahedron, highlight the regions where broadcasting of coherence is possible. In these two regions along the edges, the change in hue signifies how the values of the non-local coherence, $C(\tilde\rho_{12})$ or $C(\tilde\rho_{34})$, changes with respect to the values of the $\beta$-parameters. As the value of $C(\tilde\rho_{12})$ or $C(\tilde\rho_{34})$ increases, the hue changes from blue to magenta. From the figure, it is evident that in terms of coverage of the tetrahedron's volume, the zone of broadcasting is very small when compared to the no-broadcast region, thereby implying that we can broadcast coherence only in a limited set of Bell-diagonal states via local state independent cloning.\\

\textit{\centerline{Non-local Cloning}}\\

\noindent Next, we use general non-local cloner 
(see Appendix A) for cloning Bell diagonal state. The output states in this case are given by, $\tilde\rho_{13} =  \big\{\vec{0},\vec{0},T_{nl}\big\}$, $\tilde\rho_{24} = \big\{\vec{0},\vec{0},T_{nl}\big\}$, and $\tilde\rho_{12} = \tilde\rho_{34} = \big\{\vec{0},\vec{0},\mu T^{b}\big\}$, where $\mu=1-4\lambda$ with $\lambda$ being the machine parameter. Here, $T_{nl}$ is a 3x3 diagonal matrix with the diagonal elements being $2\lambda$, $2\lambda$ and $1 -8\lambda$, whereas, $T^{b}$ is the input state correlation matrix for bell diagonal state, as described earlier.\\

\noindent The coherence of local output states is given by, $C(\tilde\rho_{13}) = C(\tilde\rho_{24}) = 2\lambda$, while, on the other hand, the coherence of non-local output states is given by, $C(\tilde\rho_{12}) = C(\tilde\rho_{34}) = \frac{1}{\sqrt{2}}[|(2 \beta_2 - \beta_3) (-1 + 4 \lambda)| + |\beta_3 - 4 \beta_3 \lambda|]$. To obtain the state independent version of the cloner, we substitute the value of $\lambda$ by $1/10$ (see Appendix A).  Now, after substituting $\lambda = 1/10$, we have, $C(\tilde\rho_{13}) = C(\tilde\rho_{24}) = \frac{1}{5}$, $C(\tilde\rho_{12}) = C(\tilde\rho_{34}) = \frac{3}{5\sqrt{2}}[|(\beta_3 - 2 \beta_2)| + |\beta_3|].$
In this case, the expression for the condition of broadcasting reduces to,  $3\sqrt{2}[|(\beta_3 - 2 \beta_2)| + |\beta_3|] > 1.$\\

\begin{table}[ht!]
\begin{center}
    \begin{tabular}{| p{0.8cm} | p{0.8cm} | p{4.5cm} |}
    \hline
$\beta_1$ & $\beta_3$ & $\beta_2$            \\ \hline

-0.2 & -0.2 & 0.036 $<$ $\beta_2$ $\leq$ 0.212 \\ \hline
-0.2 & -0.1 & 0.136 $<$ $\beta_2$ $\leq$ 0.212  \\ \hline
-0.2 & 0.1 & -0.212 $\leq$ $\beta_2$ $<$ -0.136  \\ \hline
-0.2 & 0.2 & -0.212 $\leq$ $\beta_2$ $<$ -0.036  \\ \hline

-0.1 & -0.2 & -0.283 $\leq$ $\beta_2$ $<$ -0.236 and 0.036 $<$ $\beta_2$ $\leq$ 0.283  \\ \hline
-0.1 & -0.1 & -0.283 $\leq$ $\beta_2$ $<$ -0.236 and 0.136 $<$ $\beta_2$ $\leq$ 0.283  \\ \hline
-0.1 & 0.1 & -0.283 $\leq$ $\beta_2$ $<$ -0.136 and 0.236 $<$ $\beta_2$ $\leq$ 0.283  \\ \hline
-0.1 & 0.2 & -0.283 $\leq$ $\beta_2$ $<$ -0.036 and 0.236 $<$ $\beta_2$ $\leq$ 0.283  \\ \hline

0.1 & -0.2 & -0.424 $\leq$ $\beta_2$ $<$ -0.236 and 0.036 $<$ $\beta_2$ $\leq$ 0.424  \\ \hline
0.1 & -0.1 & -0.424 $\leq$ $\beta_2$ $<$ -0.236 and 0.136 $<$ $\beta_2$ $\leq$ 0.424  \\ \hline
0.1 & 0.1 & -0.424 $\leq$ $\beta_2$ $<$ -0.136 and 0.236 $<$ $\beta_2$ $\leq$ 0.424 \\ \hline
0.1 &  0.2 & -0.424 $\leq$ $\beta_2$ $<$ -0.036 and 0.236 $<$ $\beta_2$ $\leq$ 0.424  \\ \hline
0.2 & -0.2 & -0.495 $\leq$ $\beta_2$ $<$ -0.236 or 0.036 $<$ $\beta_2$ $\leq$ 0.495  \\ \hline
0.2 & -0.1 & -0.495 $\leq$ $\beta_2$ $<$ -0.236 or 0.136 $<$ $\beta_2$ $\leq$ 0.495  \\ \hline
0.2 & 0.1 & -0.495 $\leq$ $\beta_2$ $<$ -0.136 or 0.236 $<$ $\beta_2$ $\leq$ 0.495  \\ \hline
0.2 & 0.2 & -0.495 $\leq$ $\beta_2$ $<$ -0.036 or 0.236 $<$ $\beta_2$ $\leq$ 0.495 \\ \hline

\end{tabular}
\end{center}
\caption{\noindent \scriptsize Broadcasting ranges for state-independent non-local cloning in case of Bell-diagonal states in terms of $\beta_1$ , $\beta_3$ and $\beta_2$ parameters.}
\label{tab:Bell_SIDNL_table}
\end{table}

\noindent  In Table \ref{tab:Bell_SIDNL_table}, we have shown some of the trends in broadcasting ranges obtained in terms of $\beta_1$, $\beta_2$ and $\beta_3$ after state independent non-local cloning transformations. We fix the values of $\beta_1$ and $\beta_3$ and substitute their respective values in the expression $3\sqrt{2}[|(\beta_3 - 2 \beta_2)| + |\beta_3|] > 1.$ Here, from $\beta_1 = -0.1$ onwards, discontinuities in the intervals corresponding to the ranges of $\beta_2$-parameter are observed, which imply the presence of a no-broadcast zone in between.\\

\begin{figure}[htbp]
\begin{center}
\includegraphics[height=7.5cm,width=8cm]{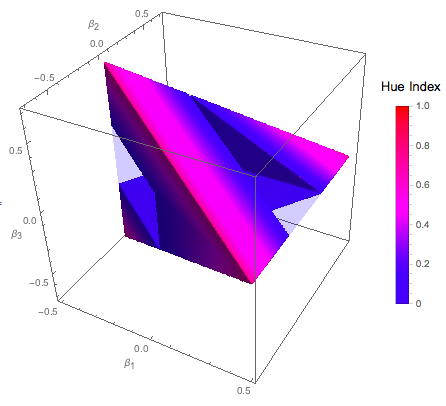}\\
\end{center}
\caption{\noindent \scriptsize Broadcasting ranges for non-local state-independent cloning in case of Bell-diagonal states.The Hue Index represents the trend in Hue with respect to value of non-local coherence ($C(\tilde\rho_{12})$ or $C(\tilde\rho_{34})$) after normalised in a scale of 0 to 1.}
\label{fig:Bell_SIDNL}
\end{figure}

\noindent In Figure \ref{fig:Bell_SIDNL}, the region of broadcasting is superimposed on the tetrahedral-region hosting all Bell-diagonal states in $\beta$-coordinate system, with axes $\beta_1$, $\beta_2$ and $\beta_3$.The light light regions represent the no-broadcast zones, while the regions on the dark blue - magenta spectrum, occupying a significant fraction of the tetrahedron's volume, highlight the regions where we can broadcast coherence. In the bluish-magenta region, the change in hue signifies how the values of the non-local coherence, $C(\tilde\rho_{12})$ or $C(\tilde\rho_{34})$, changes with respect to the values of the $\beta$-parameters. As the value of $C(\tilde\rho_{12})$ or $C(\tilde\rho_{34})$ increases, the hue changes from blue to magenta. For instance, the concentrated presence of magenta-pink hue, especially in the regions along the tetrahedron's edge and in the regions surrounding the corners, suggests that non-local coherence is much higher in those regions, compared to the other regions in the broadcasting zone. Also, we find that there is a light purple cavity-like zone sandwiched in between the bluish-magenta regions, which points to a no-broadcast zone. This cavity-like zone accounts for discontinuities int the intervals corresponding to ranges of $\beta_2$-parameter in Table \ref{tab:Bell_SIDNL_table}. Moreover, we observe in the figure that in terms of coverage of the tetrahedron's volume, the zone of broadcasting in the non-local cloning scenario is significantly larger when compared to that of the local cloning scenario (Figure \ref{fig:Bell_SIDL}), thereby suggesting that non-local state independent cloning performs much better when it comes to broadcasting of coherence in the bell-diagonal states.\\

\section{IV. Conclusion}

\noindent{In short, in this work, we have explored and studied the problem of broadcasting of coherence with the help of local and non-local cloning machines. We start with a incoherent pair of qubits along with a coherent pair of qubits as inputs to a quantum cloning machine and after cloning, we induce coherence in the initially incoherent pair of qubits. Firstly, we show that it is impossible to broadcast coherence optimally via cloning. However, we can broadcast coherence non-optimally via cloning. Secondly, we show that in case of non-optimal broadcasting, the coherence induced in the output pairs of qubits is always lesser than the initial coherent pair of qubits which was taken as input into the cloning machine. In this work, both these results have been proven with respect to the computational basis by taking the $l_{1}$-norm as quantifier of coherence. We conjecture that the impossibility of optimal broadcasting and the possibility of non-optimal broadcasting will hold true for any legitimate quantifier of coherence within the framework of resource theory of coherence. This is because for optimal broadcasting of coherence, by definition, we must have zero coherence in the local output states with respect to any legitimate quantifier of coherence in the chosen basis. Hence, in such a scenario, it is impossible for the local output states to be coherent with respect to one quantifier of coherence, while, being incoherent with another quantifier of coherence.\\}

\noindent \textit{Acknowledgements:} Authors acknowledge Mr. Chiranjib Mukhopadhya, Mr. Tushant Jha  and Mr. Sourav Chatterjee for their suggestions.


\section{Appendix: Supplementary Material}
\section{Appendix A}

\noindent The cloning of arbitrary quantum states is forbidden by No-cloning theorem. However, there still exists the possibility of cloning arbitrary quantum states approximately with certain fidelity, or, 'imperfection'. Quantum cloning transformations are (CPTP) completely positive trace preserving map between two quantum systems, supported by an ancilla state. In this section, at first, we briefly describe the Buzek-Hillery (B-H) QCM. \\

\noindent The unitary part of B-H cloning machine ($U_{bh}$) is a $M$-dimensional quantum copying transformation acting on a state $\left|\Psi_i\right\rangle_{a_0}$ ($i$ = 1, ..., $M$), which is to be copied onto a blank state $\left|0\right\rangle_{a_1}$. Initially, the copy is prepared in state $\left|X\right\rangle_x$ which subsequently gets transformed into another set of state vectors $\left|X_{ii}\right\rangle_x$ and $\left|Y_{ij}\right\rangle_x$ as a result of application of the cloning machine. Here $a_0$, $a_1$ and $x$ represent the input, blank and machine qubits respectively. In this case, these transformed state vectors belong to the orthonormal basis set in the $M$-dimensional space. The transformation scheme $U_{bh}$ is given by,
\begin{eqnarray}
&&U_{bh}\left|\Psi_i\right\rangle_{a_0} \left|0\right\rangle_{a_1} \left|X\right\rangle_x \rightarrow  c\left|\Psi_i\right\rangle_{a_0} \left|\Psi_i\right\rangle_{a_1} \left|X_{ii}\right\rangle_x \nonumber\\
&&+d\displaystyle \sum_{j\neq i}^{M} \left(\left|\Psi_i\right\rangle_{a_0} \left|\Psi_j\right\rangle_{a_1} +\left|\Psi_j\right\rangle_{a_0}\left|\Psi_i\right\rangle_{a_1}\right) \left|Y_{ij}\right\rangle_x,
\label{eq:B-H_gen_transform}
\end{eqnarray}
where $i,\:j$ = $\{1,...,M\}$,  and the coefficients $c$ and $d$ are real. 

\subsection{State independent cloning transformations}
\noindent An optimal state independent version of the B-H cloner can be obtained from Eq.\eqref{eq:B-H_gen_transform} by imposing the unitarity and normalization conditions which give rise to the following set of constraints, \\$\braket{X_{ii}}{X_{ii}} = \braket{Y_{ij}}{Y_{ij}} = \braket{X_{ii}}{Y_{ji}} = 1$,\\ $\braket{X_{ii}}{Y_{ij}} = \braket{Y_{ji}}{Y_{ij}} = \braket{X_{ii}}{X_{jj}} = 0$ and 
\begin{equation}
c^2 = \frac{2}{M+1}, d^2 = \frac{1}{2(M+1)}.
\label{eq:stateindependence}
\end{equation}
\noindent Here, $M=2^m$ where $m$ is the number of qubits in a given quantum register. In this transformation,the fidelity of the output copies doesn't depend on the input state because the shrinking or scaling property is independent of the input states.\\ 

\noindent In case of the state dependent cloner the unitarity constraints on the B-H cloning transformation in Eq.~\eqref{eq:B-H_gen_transform} give rise to the following set of conditions: 
\begin{eqnarray}
 & c^2\left\langle X_{ii}|X_{ii} \right\rangle + d^2\displaystyle \sum^{M}_{j\neq i} 2 \left\langle Y_{ij}|Y_{ij} \right\rangle =1,\:\left\langle Y_{ij}|Y_{kl} \right\rangle =0,
\end{eqnarray}
where $i \neq j$ and $ij \neq kl$ for  $i,j,k,l=\{1,...,M\}$; and the output states are orthonormal.\\
We find that,
\begin{eqnarray}
c^2 = 1 - 2(M-1)d^2 .
\end{eqnarray}
If we substitute $\lambda = d^2$, $c^2$ becomes $1-2(M-1)\lambda$. We call $\lambda$ as the machine parameter.

\subsubsection{Local state independent cloning machine}
\noindent The optimal cloning machine described above  with $M=2$ becomes a local cloning machine ($U^{l}_{bhsi}$). By substituting the values of the coefficients  $c$=$\sqrt{\frac{2}{3}}$ and $d$=$\sqrt{\frac{1}{6}}$ in Eq.~\eqref{eq:B-H_gen_transform}, we obtain the optimal state independent cloning machine which can be used for local copying purposes.In this case, the value of $\lambda$ is $1/6$.\\

\subsubsection{Non local state independent cloning machine}
\noindent Similarly,the optimal cloning machine described above becomes a non local cloning machine ($U^{nl}_{bhsi}$) for $M=4$. Then the corresponding 
values of the coefficients $c$ and $d$ in Eq.10 become $\sqrt{\frac{2}{5}}$ and $\sqrt{\frac{1}{10}}$ 
respectively.Now, by substituting the values of the coefficients $c$=$\sqrt{\frac{2}{5}}$ and $d$=$\sqrt{\frac{1}{10}}$ in Eq.~\eqref{eq:B-H_gen_transform}, we obtain the optimal state independent non local cloning machine.In this case, the value of $\lambda$ is $1/10$.

\section{Appendix B}
\label{B}
\begin{figure}[htbp]
\begin{center}
\[
\begin{array}{cc}
\includegraphics[height=4.5cm,width=4.5cm]{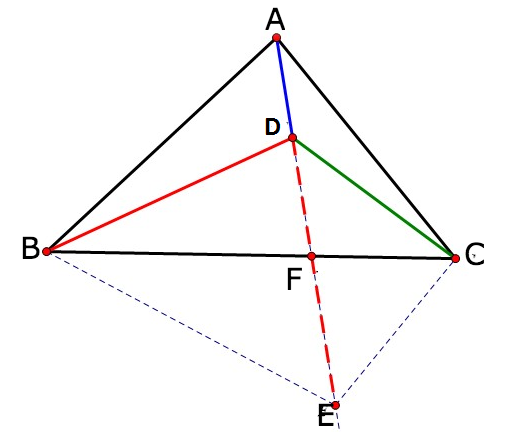} 
\end{array}
\]
\end{center}
\caption{\noindent \scriptsize
$\Delta ABC$ with point M in its interior.}
\end{figure}

\noindent \begin{lemma}
\textit{Let M be a point in the interior of $\Delta ABC.$ Then, $\overline{AC}+\overline{BC} > \overline{AD}+\overline{BD}$, $\overline{AB}+\overline{AC} > \overline{BD}+\overline{DC}$ and $\overline{BC}+\overline{BA} > \overline{DC}+\overline{DA}$.}
\end{lemma}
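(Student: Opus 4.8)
The plan is to notice that the three displayed inequalities are the same statement with the labels $A,B,C$ permuted, so it suffices to prove a single one of them, say $\overline{AC}+\overline{BC}>\overline{AD}+\overline{BD}$, where I write $D$ for the interior point. That one inequality I would prove by the classical ``extend a cevian'' trick, using nothing beyond the triangle inequality applied twice and simple segment additivity.

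First I would introduce an auxiliary boundary point: extend the segment $AD$ beyond $D$. Since $D$ lies in the interior of $\Delta ABC$, the ray from $A$ through $D$ must leave the triangle through the side opposite $A$, meeting $\overline{BC}$ at a point $E$ strictly between $B$ and $C$, with $D$ lying strictly between $A$ and $E$. This is the only genuinely ``geometric'' input, and it is precisely where interiority of $D$ is used; I would justify it briefly by noting that the ray crosses the line $BC$ (as $D$ and $A$ are on opposite sides of no such line — rather, $D$ is inside, so the ray exits) and cannot exit through $AB$ or $AC$ because $A$ is its endpoint. Having $E$, we get the two identities $\overline{AE}=\overline{AD}+\overline{DE}$ and $\overline{BC}=\overline{BE}+\overline{EC}$.

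Then I apply the triangle inequality in $\Delta ACE$ to obtain $\overline{AC}+\overline{CE}>\overline{AE}=\overline{AD}+\overline{DE}$, and in $\Delta BDE$ to obtain $\overline{DE}+\overline{EB}>\overline{BD}$. Adding these two inequalities and cancelling the common term $\overline{DE}$ gives $\overline{AC}+\overline{CE}+\overline{EB}>\overline{AD}+\overline{BD}$, i.e. $\overline{AC}+\overline{CB}>\overline{AD}+\overline{BD}$, which is the desired inequality. Running the identical argument with the vertices relabelled yields the remaining two inequalities, completing the proof. (In the paper this lemma is then invoked with the concrete identification $A(t_{13},t_{23})$, $B(x_1,x_2)$, $C(-x_1,-x_2)$, $M=D(\mu t_{13},\mu t_{23})$ to conclude $\tilde b_2 < a_2$.)

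The only real obstacle is the bookkeeping around $E$: one must be sure that $E$ actually lands in the open segment $\overline{BC}$ rather than on one of its extensions, and that $D$ lies between $A$ and $E$. Both facts are immediate consequences of $D$ being an interior point, but they should be stated explicitly, since they are exactly what makes the two triangle inequalities and the two segment-additivity identities legitimate; everything after that is a one-line cancellation.
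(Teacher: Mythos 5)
Your proof is correct and follows essentially the same route as the paper's: extend the cevian $AD$ to meet $\overline{BC}$, apply the triangle inequality in the two resulting sub-triangles, and cancel the common segment. The only difference is cosmetic (your $E$ is the paper's $F$), and you are in fact slightly more careful than the paper in explicitly justifying that the extension point lands strictly inside $\overline{BC}$ with $D$ between $A$ and it.
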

 
\noindent \begin{proof}
As shown in Figure 5, $\overline{AD}$ is extended to point E outside $\Delta ABC$ intersecting $\overline{BC}$ at $F$. By applying Triangle Inequality on $\Delta ACF$ and $\Delta BFD$, we get $\overline{AC}$ + $\overline{CF}$ $>$ $\overline{AF}$ and, $\overline{DF}$ + $\overline{BF}$ $>$ $\overline{BD}$ respectively.\\Also, as $D$ lies on $\overline{AF} $, we have ,$\overline{AD}$ + $\overline{DF}$ = $\overline{AF}.$ \\Now, adding these two inequalities, we have,
\begin{flushleft}
$\overline{AC}$ + $\overline{CF}$ +$\overline{BF}$ + $\overline{DF}$ $>$ $\overline{AF}$ + $\overline{BD}$ \\

$\implies$ $\overline{AC}$ + $(\overline{CF}$ +$\overline{BF})$ + $\overline{DF}$ $>$ $(\overline{AD}$ + $\overline{DF})$ + $\overline{BD}$ \\

$\implies$ $\overline{AC}$ + $\overline{BC}$ +  $\overline{DF}$ $>$ $\overline{AD}$ + $\overline{DF}$ + $\overline{BD}$ \\ 
\end{flushleft}
Cancelling out $\overline{DF}$, we finally get,$ \overline{AC}$ + $\overline{BC}$ $>$ $\overline{AD}$ + $\overline{BD}.$ Similarly, the other two inequalities can be proved.\\
\end{proof}

\end{document}